\newtheorem{theorem}{Theorem}[section]
\newtheorem{lemma}{Lemma}[section]
\newtheorem{corollary}{Corollary}[section]
\newcommand{\qed}{\hfill\hbox{\rlap{$\sqcap$}$\sqcup$}}
\newenvironment{proof}{\noindent \emph{Proof.\,}}{\qed}
\algnewcommand\algorithmicreturn{\textbf{return}}
\algnewcommand\RETURN{\algorithmicreturn}
\algnewcommand\algorithmicprocedure{\textbf{procedure}}
\algnewcommand\PROCEDURE{\item[\algorithmicprocedure]}%
\algnewcommand\algorithmicendprocedure{\textbf{end procedure}}
\algnewcommand\ENDPROCEDURE{\item[\algorithmicendprocedure]}%
\algnewcommand{\algvar}[1]{{\text{\ttfamily\detokenize{#1}}}}
\algnewcommand{\algarg}[1]{{\text{\ttfamily\itshape\detokenize{#1}}}}
\algnewcommand{\algproc}[1]{{\text{\ttfamily\detokenize{#1}}}}
\algnewcommand{\algassign}{\leftarrow}
\title{Optimal Bridge, Twin Bridges and Beyond: Inserting Edges into a Road Network to Minimize the Constrained Diameters} 
\author{Zhidan Feng\footnote{University of Trier, FB 4 – Informatikwissenschaften, 54286, Trier, Germany. Email: {\tt s4zhfeng@uni-trier.de}.}
\and
Henning Fernau\footnote{University of Trier, FB 4 – Informatikwissenschaften, 54286, Trier, Germany. Email: {\tt fernau@uni-trier.de}.} 
\and
Binhai Zhu\footnote{Gianforte School of Computing, Montana State University, Bozeman, MT 59717, USA. Email: {\tt bhz@montana.edu}.}
}
\date{}
\begin{document}

\maketitle

\begin{abstract}
Given a road network modelled as a planar straight-line graph $G=(V,E)$ with $|V|=n$, let $(u,v)\in V\times V$, the shortest path (distance) between $u,v$ is denoted as $\delta_G(u,v)$. Let $\delta(G)=\max_{(u,v)}\delta_G(u,v)$, for $(u,v)\in V\times V$, which is called the diameter of $G$. Given a disconnected road network modelled as two disjoint trees $T_1$ and $T_2$, this paper first aims at inserting one and two edges (bridges) between them to minimize the (constrained) diameter $\delta(T_1\cup T_2\cup I_j)$ going through the inserted edges, where $I_j, j=1,2$, is the set of inserted edges with $|I_1|=1$ and $|I_2|=2$. The corresponding problems are called the 
{\em optimal bridge} and {\em twin bridges} problems. Since when more than one edge are inserted between two trees the resulting graph is becoming more complex, for the general network $G$ we consider the problem of inserting a minimum of $k$ edges such that the shortest distances between a set of $m$ pairs 
$P=\{(u_i,v_i)\mid u_i,v_i\in V, i\in [m]\}$, $\delta_G(u_i,v_i)$'s, are all decreased.

The main results of this paper are summarized as follows:
\begin{itemize}
    \item We show that the optimal bridge problem can be solved in $O(n^2)$ time and that a variation of it has a near-quadratic lower bound unless SETH fails. The proof also implies that the famous 3-SUM problem does have a near-quadratic lower bound for large integers, e.g., each of the $n$ input integers has 
    $\Omega(\log n)$ decimal digits.
    We then give a simple factor-2 $O(n\log n)$ time approximation algorithm for the optimal bridge problem.
    \item We present an $O(n^4)$ time algorithm to solve the twin bridges problem, exploiting some new property not in the optimal bridge problem.
    \item For the general problem of inserting $k$ edges to reduce the (graph) distances between $m$ given pairs, we show that the problem is NP-complete --- even if the given and resulting graphs are planar and the inserted edges must have a bounded length.
\end{itemize} 
\end{abstract}

\section{Introduction}

Geometric spanners have received a lot of attention since 1986 \cite{DBLP:conf/compgeom/Chew86}. In the majority of the literature, the problem is to construct a geometric (or metric) graph $G$ from scratch such
that the maximum stretch factor (or dilation) of $G$ is minimized (or approximated). Recently, a breakthrough result of Gudmundsson and Wong proved that by inserting $k$ edges greedily into a connected metric graph $M$ the dilation of the resulting graph can be approximated with a factor of $O(k)$ \cite{GW22}.

In this paper, we consider a similar problem of inserting edges to an existing road network $G$ (modelled as a planar straight-line graph of $n$ vertices) 
such that the diameter of the resulting graph is minimized/reduced. We start from the basic problem of inserting one and two edges into two disjoint trees~$T_1$ and~$T_2$ to minimize the constrained diameter, going through the inserted edges, of the resulting graphs, which are called the {\em optimal bridge} and {\em twin bridges} problems. These problems could appear in some scenario, for example, in the recent work by Higashikawa et al., where the vertices of $G$ could be in red and black and a path must be planned with no red-red edge (i.e., no edges whose vertices are both red)~\cite{COCOON23}. Naturally, such a path could be planned by deleting all red-red edges in $G$,
possibly resulting in disjoint components (disjoint trees or a forest when $G$ is a tree). Then, to ensure connectivity and minimizing/reducing the dilation some 
non-red-red edges much be inserted --- even though how to do that is left open.

Coming back to the optimal bridge and twin bridges problems, we present $O(n^2)$ and $O(n^4)$ time algorithms to solve them. For the former, we show that a variation of it, the one-bridge decision problem, has a near-quadratic lower bound unless the Strong Exponential Time Hypothesis (SETH) fails. Our method 
modifies the traditional one by Williams and it can be used to prove that the famous 3-SUM problem has a near-quadratic lower bound when the $n$ input integers are large, i.e., each has $\Omega(\log n)$ decimal digits.
Then, a simple
$O(n\log n)$ time approximation algorithm is designed to achieve a factor-2 (the factor is tight). Consequently, this approximation algorithm can be
used as a subroutine to approximate the more general problem of connecting $k+1$ trees with $k$ edges into a tree~$T$ so as to minimize the diameter of~$T$, with a factor of 4 and a running time of $O(n\log n)$. For the twin bridges problem, not surprisingly, when two edges are added between~$T_1$ and $T_2$, the resulting graph is not a tree anymore. Hence we need to consider a new scenario that never occurs in the optimal bridge problem.

On the other hand, from a pure geometric setting, the optimal bridge problem is well-studied when the input is a pair of polygons.
The problem was first studied by Cai, Xu and Zhu in 1999 for the case where the input is a pair of convex polygons~$P$ and~$Q$, with a total of $n$ vertices \cite{DBLP:journals/ipl/CaiXZ99}. (In this case, the problem is to construct a connecting  segment $pq$ between $P,Q$ such that $p\in P$ and $q\in Q$ and
$\max_{x\in P}|xp|+|pq|+\max_{y\in Q}|qy|$ is minimized. Here, $|xp|$ is the Euclidean distance between points $x$ and $p$.)
The problem was solved in $O(n^2\log n)$ time and was subsequently improved to $O(n)$ by using geometric properties \cite{DBLP:journals/ipl/Tan00,DBLP:journals/ipl/BhattacharyaB01,DBLP:journals/mst/KimS01}. When the polygons $P,Q$ are not necessarily convex, the problem is 
to find a connecting segment $pq$ between $P,Q$ such that $p\in P$ and $q\in Q$ to minimize
$\max_{x\in P}\delta_P(x,p)+|pq|+\max_{y\in Q}\delta_Q(q,y)$. (Here $\delta_P(x,p)$ is the geodesic distance between $x$ and $p$ in polygon $P$.)
For this problem, there are two versions: (1) $p,q$ are visible to each other, and (2) $p,q$ do not have to be visible to each other.
For the first case, the problem can be solved in $O(n^2)$ time by Kim and Shin~\cite{DBLP:journals/mst/KimS01}, improved to $O(n\log^3 n)$ time by Tan~\cite{DBLP:journals/ijcga/Tan02}. For the second case, Bhosle and Gonzalez gave an $O(n^2\log n)$ time algorithm \cite{DBLP:journals/ijcga/BhosleG05} (and they solved a variation when $p,q$ could be connected by one or more segments in the same amount of time~\cite{DBLP:conf/cccg/BhosleG11}).
When $P,Q$ are rectilinear and $p,q$ are connected by one or two rectilinear segments, Wang gave an optimal $O(n)$ time algorithm \cite{DBLP:journals/ipl/Wang01}.

Also in the geometric setting, when $P,Q$ are both convex, a simple $O(n)$ time greedy 2-approximation algorithm was given by Cai, Xu and Zhu \cite{DBLP:journals/ipl/CaiXZ99}. The algorithm is simply to connect two closest points between $P$ and $Q$. Slightly later, Ahn, Cheong and Shin gave
a factor-$\sqrt{2}$ approximation by connecting the centers of $P$ and $Q$ (i.e., the centers of the circumscribing circles of $P$ and $Q$), which can be
used to obtain a factor-$2\sqrt{2}$ approximation to connect $k+1$ convex polygons by inserting $k$ edges to minimizing the (geodesic) diameter. In our case, 
we show that the greedy algorithm by Cai, Xu and Zhu can be adapted to the road network setting to achieve a factor-2 approximation and the factor is tight.
The algorithm can also be applied to obtain a factor-4 approximation to connect $k+1$ trees into a tree~$T$ by inserting $k$ edges so as to minimize the diameter of~$T$.

As discussed earlier, when two or more edges are added between~$T_1$ and~$T_2$, the resulting graph is not a tree anymore. Hence, algorithms must make use of some new properties. We prove such properties and present an $O(n^4)$ time algorithm to solve the problem.

On the other hand, it is noted that when two or more edges are inserted to connect $T_1$ and $T_2$ it might be possible that it cannot reduce the diameter further --- regardless of if the second edge is inserted between $T_1$ and $T_2$ or between the vertices in $T_i, i\in \{1,2\}$.
(An example is when $T_1$ and $T_2$ are two disjoint segments on the same line.) Hence, the problem here is more on inserting edges to a connected graph to minimize/reduce the diameter, closer to the setting by Gudmundsson and Wong~\cite{GW22}.

Finally, we investigate the complexity of the following problem: given a connected road network $G=(V,E)$ and a set of pairs $P=\{(u,v)\mid u,v\in V\}$, insert $k$ edges such that the distances $\delta_G(u,v)$, with $(u,v)\in P$, can all be reduced.
We prove that this problem is NP-complete by a reduction from
Vertex Cover on Planar 2-Connected Cubic Graphs \cite{BM01}.

The paper is organized as follows. In Section 2 we give necessary definitions. In Section 3 we study the optimal bridge problem. In Section 4 we present the results for the twin bridges problem. In Section 5, we present the NP-completeness proof for the general case. We conclude the paper in Section 6.

\section{Preliminaries}

We define a road network as a (weighted) planar straight-line graph
$G=(V,E)$. Each vertex (node) $p\in V$ is a point in the plane
with coordinates $p=(x(p),y(p))$. The weight between two points $p,q$ is their Euclidean distance 
which is denoted as $|pq|$, and is equal to $\sqrt{(x(p)-x(q))^2+(y(p)-y(q))^2}$. Given $u,v\in V$, the shortest path between them on $G$ is denoted as
$\delta_G(u,v)$. (We also slightly abuse the notation to use $\delta_G(u,v)$ as the length of the corresponding shortest path.) The diameter of $G$ is defined as $\max_{u,v\in V}\delta_G(u,v)$.

A tree $T=(V(T),E(T))$ is a connected road network with no cycle. As well-known facts, between $u,v\in V(T)$ the shortest path $\delta_T(u,v)$ is unique.  Also, if $|V(T)|=n$, then the shortest path tree from any node $u$ to all other nodes can be computed in $O(n)$ time by running the breadth-first search (BFS) algorithm starting at $u$. Consequently, the all-pair shortest paths problem on~$T$ can be solved in $O(n^2)$ time; moreover, when this information is stored as a table, the length of $\delta_T(u,v)$ can be returned in $O(1)$ time. In fact, $\max_{x\in V(T)}\delta_T(u,x)$, the maximum distance from $u\in V(T)$ to a (leaf) node $x$, can be computed in $O(n)$ time by looking at the row/column maximum; moreover, once stored, this distance can be queried in $O(1)$ time. Throughout the paper, we assume such a table on all-pair shortest path distances is associated with the corresponding tree.

\section{The optimal bridge problem}

The optimal bridge problem is defined as follows: given two tree road networks (trees for short) $T_1$ and~$T_2$, add an edge $(p,q)$, with $p\in V(T_1)$, $q\in V(T_2)$, such that the (constrained) diameter through $pq$ of the resulting graph $T_1\cup T_2\cup \{(p,q)\}$, i.e.,
the maximum of $\delta_{T_1}(x,p)+|pq|+\delta_{T_2}(q,y)$ with $x\in V(T_1)$ and $y\in V(T_2)$,
is minimized.

An $O(n^2)$ time algorithm can be easily designed to solve the optimal bridge problem: 
\begin{enumerate}
    \item For each pair, $p\in V(T_1)$, $q\in V(T_2)$, record
    $\max_{x\in V(T_1)}\delta_{T_1}(x,p)+|pq|+\max_{y\in V(T_2)}\delta_{T_2}(q,y)$.
    \item Return the minimum distance recorded and then compute the corresponding path between~$x$ and~$y$.    
\end{enumerate}

It is intriguing to know if this $O(n^2)$ time bound can be further improved. For the (pure) geometric version where two polygons with vertices $n$ are given and the bridge must be a segment $pq$ where $p$ is visible from $q$, an $O(n\log^3 n)$ time algorithm is known \cite{DBLP:journals/ijcga/Tan02}.

We define the {\em one-bridge} problem as the decision version of the optimal bridge: given constants $C_1$ and $C_2$, determine if a bridge $pq$ exists such that $|pq|= C_1$ and its {\em specific} solution value $\delta_{T_1}(x,p)+|pq|+\delta_{T_2}(q,y) = C_2$,
where $x$ is a leaf in $T_1$ and $y$ is a leaf in $T_2$.
Clearly, one-bridge is a variation of the optimal bridge problem. We show next that the one-bridge problem cannot be solved in $O(n^{2-\epsilon})$ time unless the SETH (Strong Exponential Time Hypothesis) fails. Thus, even though this does not give a similar lower bound for the optimal bridge problem, it does give some evidence that it is probably hard to
improve the $O(n^2)$-time bound for the optimal bridge problem. 

\subsection{A near-quadratic lower bound for the one-bridge problem}

We prove the conditional lower bound by considering a variation of Williams' Orthogonal Vector (OV) problem
\cite{DBLP:journals/tcs/Williams05}, which we call {\em Complementary Orthogonal Vectors} problem (COV for short).

Given a binary (0/1) vector in $d$-dimension, $v=(v_1,v_2,\dots,v_d)$, define $\bar{v}=(\bar{v}_1,\bar{v}_2,\dots,\bar{v}_d)$ (with $\bar{0}=1$ and $\bar{1}=0$). We say that $v$ and $\bar{v}$ are {\em complementary} to each other. Clearly, $v$ and $\bar{v}$ are orthogonal, i.e., the dot product satisfies $v\cdot \bar{v}=0$.  Roughly, in Williams' case, we given a $k$-SAT instance $\phi'$ with a set of $n$ variables $A\cup B$, with $|A|$ and $|B|$ being of roughly the same size $n/2$.  Then two sets of binary vectors in $d$-dimension, $A'$ and~$B'$, which correspond to how a partial assignment 
of $A$ and $B$ satisfy the $d$ clauses in $\phi'$, are constructed. Here the $j$-th component of a $d$-vector $u\in A'$ is 0 if the partial assignment of $A$ can satisfy the $j$-th clause in
$\phi'$, and 1 otherwise. (A vector in $B'$ is similarly defined.) Then, Williams' problem is to find if there are binary vectors $u\in A'$ and $v\in B'$ such that they are orthogonal, i.e., $u\cdot v=0$. In our {\em Complementary Orthogonal Vectors} (COV) problem, we have two sets of {\em ternary} vectors $A$ and $B$ such that each entry in a vector of $A$ and $B$ is an element of $\{0,1,2\}$, and we want to find a pair of binary vectors $u\in A$ and $v\in B$ such
that they are complementary.

To modify William's proof, we make the following changes: (1) Use One-in-three SAT instead of $k$-SAT;
(2) Modify the definition for constructing the sets of $d$-vectors.

Let $\phi$ be an One-in-three SAT instance composed of $n$ variables and $m$ disjunctive clauses where the $i$-th clause $F_i$ contains three literals and is in the form of $(x_{i,1}\vee x_{i,2}\vee x_{i,3})$. The problem is to determine for $i=1..m$, exactly one of the three literals in each clause $F_i$, i.e., $x_{i,1}, x_{i,2}$ and $x_{i,3}$, is assigned {\tt TRUE}. One-in-three SAT is a well-known NP-complete problem with $m=\Omega(n)$ \cite{DBLP:conf/stoc/Schaefer78}.

We arbitrarily partition the variables in $\phi$ into two equal-sized parts $V_A$ and $V_B$ (we can assume that $n$ is even, though it does not really matter for the result). Each of the $m$-vectors $u\in A$ is determined by an assignment $\alpha_A$ of $V_A$ (i.e., a partial assignment of the variables in $\phi$), where
$$u=(u_1,u_2,\cdots,u_m),$$
and 
\begin{equation}
    u_i=
    \begin{cases}
        0 & \text{if}~F_i ~\text{is satisfied with exactly one TRUE literal by}~ \alpha_A,  \\
        1 & \text{if}~F_i ~\text{is not satisfied by}~ \alpha_A,\\
        2 & \text{if}~F_i ~\text{is satisfied with at least two TRUE literals by}~ \alpha_A.
    \end{cases}
\end{equation}
Similar to (1), we could define
an $m$-vector $v\in B$ determined by an assignment $\alpha_B$ of $V_B$.
In a preprocessing, we could remove all non-binary vectors from $A$ and $B$, but this does not affect the lower bound proof (for the worst case). Hence we will stick with the ternary vectors as input
for COV.

Then, similar to Williams' idea, we can claim that $\phi$ has
a valid truth assignment if and only if there are vectors $u\in A$ and $v\in B$ which are complementary (i.e., COV has a solution). As there are $2^{n/2}$ assignments for $V_A$ and $V_B$ respectively, the above reduction takes $2^{n/2}\cdot O(m)$ time. If COV could be computed in $O(N^{2-\epsilon})$ time, where $N$ is the input size for COV, One-in-three SAT could be solved in $2^{n-n\epsilon/2}\cdot O(m^{2-\epsilon})$ time --- which would contract the SETH. We hence have the following theorem.

\begin{theorem}
The Complementary Orthogonal Vectors problem with input size~$N$ cannot be solved in $O(N^{2-\epsilon})$ unless the SETH
fails.
\label{thm1}
\end{theorem}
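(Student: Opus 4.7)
The plan is to execute the reduction from One-in-three SAT to COV sketched in the excerpt, and then convert a hypothetical $O(N^{2-\epsilon})$-time algorithm for COV into an algorithm for One-in-three SAT whose running time contradicts the SETH. First I would partition the $n$ variables of $\phi$ into two equal halves $V_A,V_B$, enumerate all $2^{n/2}$ partial assignments on each side, and for each partial assignment build the $m$-dimensional ternary vector via the encoding displayed above. This produces sets $A$ and $B$ with $|A|=|B|=2^{n/2}$, total input size $N=\Theta(2^{n/2} m)$, and preprocessing time $O(2^{n/2} m)$.

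The heart of the argument is the equivalence: $\phi$ admits a One-in-three satisfying assignment iff there exist $u\in A$, $v\in B$ that are complementary. For the forward direction, split a satisfying assignment as $\alpha=\alpha_A\cup\alpha_B$ and let $a_i$ (resp.\ $b_i$) denote the number of TRUE literals that clause $F_i$ acquires under $\alpha_A$ (resp.\ $\alpha_B$). The One-in-three condition $a_i+b_i=1$ forces $(a_i,b_i)\in\{(0,1),(1,0)\}$, and the encoding translates this into $(u_i,v_i)\in\{(1,0),(0,1)\}$, i.e.\ $u$ and $v$ are binary and satisfy $u_i=\bar{v}_i$. For the converse, complementarity requires $u_i,v_i\in\{0,1\}$ with $u_i=\bar{v}_i$, and inverting the encoding produces $(a_i,b_i)\in\{(0,1),(1,0)\}$ for every clause, so $\alpha_A\cup\alpha_B$ is a valid One-in-three satisfying assignment.

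With the equivalence in hand, suppose for contradiction that COV is solvable in $O(N^{2-\epsilon})$ time. Substituting $N=\Theta(2^{n/2} m)$ and absorbing the $O(2^{n/2}m)$ preprocessing bounds the total One-in-three SAT running time by $O(2^{n(1-\epsilon/2)}\cdot m^{2-\epsilon})$; since $m=\Theta(n)$, this is strictly sub-$2^{n}$, giving the desired contradiction with the SETH.

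The main obstacle is conceptual rather than computational: SETH is classically stated for $k$-SAT, whereas the reduction starts from One-in-three SAT. To close the gap I would invoke a reduction from $k$-SAT to One-in-three SAT that blows up the number of variables by only a $1+o(1)$ factor (or appeal to a SETH variant tailored to One-in-three SAT), so that a sub-$2^n$ algorithm for One-in-three SAT in turn yields a sub-$2^{n'}$ algorithm for $k$-SAT with $n'=\Theta(n)$, still violating the SETH. A secondary care point is verifying that the encoding value $2$ does indeed block complementarity: because $2\notin\{\bar{0},\bar{1}\}=\{0,1\}$, any clause with two or three TRUE literals under $\alpha_A$ (or $\alpha_B$) automatically prevents $u$ and $v$ from being complementary, which is exactly the One-in-three behaviour we want to enforce and is the reason the encoding must be ternary rather than binary.
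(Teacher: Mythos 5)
Your proposal follows essentially the same route as the paper: the identical encoding of partial assignments on the two halves $V_A,V_B$ as ternary $m$-vectors, the same complementarity iff (which you actually spell out more explicitly than the paper, which only appeals to ``Williams' idea''), and the same running-time accounting with $N=\Theta(2^{n/2}m)$. The one point you flag---that SETH is stated for $k$-SAT while the reduction starts from One-in-three SAT, so one needs a variable-efficient reduction from $k$-SAT to One-in-three SAT or a SETH variant for the latter---is a genuine issue that the paper's own proof silently glosses over, so your raising it (and proposing to patch it) is a strength of your write-up rather than a deviation from the intended argument.
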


We next reduce COV to one-bridge as follows> Let the input for COV be two sets $A$ and $B$ each containing $n_1$ and $n_2$ (0/1/2)-vectors in $m$-dimension respectively, with $n_1+n_2=n$,
Hence the total size of COV is $N=mn$. For each $m$-vector $u\in A$, with $u=(u_1,u_2,\cdots, u_m)$,
we construct a path $L_u$ of segment lengths $\ell_i$, i.e., $L_u=(\ell_1,\ell_2,\cdots, \ell_m)$,
where
\begin{equation}
    \ell_i=
    \begin{cases}
        \frac{1}{3^{i-1}} & \text{if} ~u_i=0,\\
        0 & \text{if}~ u_i=1,\\
        4 & \text{if}~ u_i=2.
    \end{cases}
\end{equation}
Here $\ell_i$ is the length of the $i$-th segment of $L_u$. (The fact that $\ell_i=0$ means that we could have duplicated
geometric points on $L_u$.)
Similarly, given an $m$-vector $v\in B$, with $v=(v_1,v_2,\cdots,v_m)$, we can construct a path $L_v$ of segment lengths $\ell'_i$, i.e., $L_v=(\ell'_1,\ell'_2,\cdots,\ell'_m)$, with
\begin{equation}
    \ell'_i=
    \begin{cases}
        \frac{1}{3^{i-1}} & \text{if} ~v_i=0,\\
        0 & \text{if}~ v_i=1, \\
        4 & \text{if}~ v_i=2.
    \end{cases}
\end{equation}

Let $C_1=0$ and $C_2=C+2$ with
$$C=1+\left(\frac{1}{3}\right)+\left(\frac{1}{3}\right)^2+\cdots+\left(\frac{1}{3}\right)^{m-1}=\frac{3}{2}\left(1-\left(\frac{1}{3}\right)^m\right).$$
Then we put these straight paths $L_u$'s not containing segment lengths of four downward along the $Y$-axis starting at $(0,C)$. We also add another node $(0,C+1)$ and connect it to $(0,C)$. For $L_u$'s which contain a segment of length 4, we just convert each of them into a path centered at $(0,C)$, in a star fashion.
This would give us $T_1$.
Similarly, we define put the straight path $L_v$'s not containing a segment of length 4, each corresponding to
a binary vector $v\in B$, upward and starting at $(0,0)$ and we also add another node $(0,-1)$ and connect it to $(0,0)$. For $L_v$'s containing a segment of length 4, we put them in a star centered at $(0,0)$. This gives us the second tree $T_2$. See Figure~\ref{fig:fig0} for an example.

\begin{figure}
\psfrag{p,q}{$p,q$}
\psfrag{T1}{$T_1$}
\psfrag{T2}{$T_2$}
\psfrag{Lu}{$L_u$}
\psfrag{Lv}{$L_v$}
\centering
\includegraphics[width=0.45\textwidth]{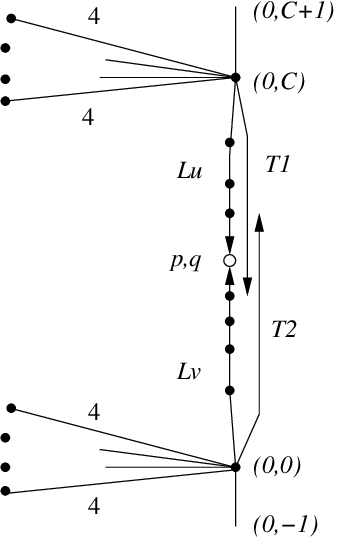}
\caption{\label{fig:fig0}An example for the reduction from COV to the one-bridge decision problem. The vertical paths, which should all be on the $Y$-axis, are drawn for a better visualization.}
\end{figure}

This reduction obviously takes $O(N)=O(mn)$ time, where the total number of points in $T_1$ and $T_2$ is $O(N)$. Finally, we claim that COV has a solution if and only if between $T_1$ and $T_2$ there is a bridge $pq$ with length $C_1=0$ and with
a specific solution value of $C_2=C+2$. We prove this ``iff'' relation next, focusing on the ``if-part'' as the ``only-if''
part is trivial. We first prove the following lemma regarding the geometric sequence $(1,\frac{1}{3},\frac{1}{3^2},\cdots,\frac{1}{3^{m-1}})$. Note that as $C<\frac{3}{2}$,
$C_2<3.5$. Hence, we can ignore any path with a segment length
of 4 in our proof, as they would incur a specific solution
of value at least 4, which is greater than $C_2$.



\begin{lemma}
For $0\leq i< m-1$, $\left(\frac{1}{3}\right)^i>\sum_{j=i+1..m-1}\left(\frac{1}{3}\right)^{j}$.
\label{lem1}
\end{lemma}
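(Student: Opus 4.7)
The plan is to prove the inequality directly by summing the finite geometric series on the right-hand side using the standard closed form, and then comparing with the left-hand side. This is essentially a routine algebraic check; the only minor care needed is to handle the number of terms correctly.

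First, I would factor out $(1/3)^{i+1}$ from the sum, so that
\[
\sum_{j=i+1}^{m-1} \left(\tfrac{1}{3}\right)^{j} \;=\; \left(\tfrac{1}{3}\right)^{i+1}\sum_{k=0}^{m-2-i}\left(\tfrac{1}{3}\right)^{k}.
\]
Then I would apply the geometric series formula $\sum_{k=0}^{N}r^{k} = \frac{1-r^{N+1}}{1-r}$ with $r = 1/3$ and $N = m-2-i$, obtaining
\[
\sum_{j=i+1}^{m-1} \left(\tfrac{1}{3}\right)^{j} \;=\; \left(\tfrac{1}{3}\right)^{i+1}\cdot\frac{1-(1/3)^{m-1-i}}{1-1/3} \;=\; \tfrac{1}{2}\left(\tfrac{1}{3}\right)^{i}\!\left(1-\left(\tfrac{1}{3}\right)^{m-1-i}\right).
\]

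The final step is to observe that the factor $1-(1/3)^{m-1-i}$ is strictly less than $1$ (since $m-1-i \geq 1$ under the hypothesis $i<m-1$, so $(1/3)^{m-1-i}>0$). Therefore the whole expression is bounded above by $\tfrac{1}{2}(1/3)^{i}$, which is strictly less than $(1/3)^{i}$, yielding the claimed inequality.

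There is no substantive obstacle here: the statement is really the familiar ``a term of a geometric series with ratio less than $1/2$ dominates the tail after it.'' The only thing to be careful about is keeping the index bookkeeping consistent (the sum has $m-1-i$ terms, not $m-i$ or $m-i-2$), so that the exponent inside the factor $1-(1/3)^{m-1-i}$ comes out right and the case $i=m-2$ (a single-term sum) is also covered correctly.
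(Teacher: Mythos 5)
Your proof is correct and follows essentially the same route as the paper: both sum the tail via the closed form for a finite geometric series, obtaining $\tfrac{1}{2}\left(\tfrac{1}{3}\right)^{i}\left(1-\left(\tfrac{1}{3}\right)^{m-1-i}\right)$, and then bound this strictly by $\tfrac{1}{2}\left(\tfrac{1}{3}\right)^{i}<\left(\tfrac{1}{3}\right)^{i}$. Your extra attention to the index bookkeeping is a welcome touch but does not change the argument.
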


\begin{proof}
By calculation, $$\sum_{j=i+1..m-1}\left(\frac{1}{3}\right)^{j}=\left(\frac{1}{3}\right)^{i+1}\cdot \frac{3}{2}\left(1-\left(\frac{1}{3}\right)^{m-i-1}\right)<\frac{1}{2}\cdot \left(\frac{1}{3}\right)^i< \left(\frac{1}{3}\right)^i.$$
\end{proof}

We then prove the ``if-part'' by proving its contrapositive: if COV does not have a solution, then
between $T_1$ and $T_2$ there does not exist a bridge $pq$ with length $C_1=0$ and with
a specific solution value of $C_2=C+2$. To finish this proof, we need the following two lemmas.

\begin{lemma}
 Let $u\in A, v\in B$; moreover, $u=(u_1,u_2,\cdots,u_m)$ and
 $v=(v_1,v_2$, $\cdots,v_m)$ with $u_i,v_i\in\{0,1\}$. If there exists an $i$ with $u_i=v_i=0$ and for all $k<i$ exactly one of $u_k$ and $v_k$ is equal to zero, then 
 the sum of segment lengths in $L_u$ and $L_v$ is greater than $C$. 
 \label{lem2}
\end{lemma}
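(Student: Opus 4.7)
The plan is to decompose the total sum of segment lengths of $L_u$ and $L_v$ position by position, use the hypothesis to pin down the contribution of each index $k\le i$, and then invoke Lemma~\ref{lem1} to absorb the tail from indices $k>i$.

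First I would translate the construction in~(2) and~(3) into an explicit sum. Since $u$ and $v$ are binary, each coordinate that is $0$ contributes $\frac{1}{3^{k-1}}$ and each coordinate that is $1$ contributes $0$, so
\[
|L_u|+|L_v| \;=\; \sum_{k:\,u_k=0}\frac{1}{3^{k-1}} \;+\; \sum_{k:\,v_k=0}\frac{1}{3^{k-1}} \;=\; \sum_{k=1}^{m} c_k\cdot \frac{1}{3^{k-1}},
\]
where $c_k=|\{u_k=0\}|+|\{v_k=0\}|\in\{0,1,2\}$ counts how many of $u_k, v_k$ are $0$.

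Next I would apply the hypotheses. For every $k<i$ exactly one of $u_k,v_k$ equals $0$, so $c_k=1$ for all such $k$. For $k=i$ both entries are $0$, so $c_i=2$. For $k>i$ the only thing I need is $c_k\ge 0$. Combining these,
\[
|L_u|+|L_v| \;\ge\; \sum_{k=1}^{i-1}\frac{1}{3^{k-1}} \;+\; \frac{2}{3^{i-1}} \;=\; \sum_{k=1}^{i}\frac{1}{3^{k-1}} \;+\; \frac{1}{3^{i-1}}.
\]

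Finally I would compare this with $C=\sum_{k=1}^{m}\frac{1}{3^{k-1}}$. The required inequality $|L_u|+|L_v|>C$ is equivalent to $\frac{1}{3^{i-1}} > \sum_{k=i+1}^{m}\frac{1}{3^{k-1}}$, which after the re-indexing $j=k-1$ is exactly Lemma~\ref{lem1} applied at index $i-1$ (and is trivially true when $i=m$, since then the right-hand sum is empty). This closes the argument. The main (minor) obstacle is just making sure the case $i=m$ is handled separately so that the invocation of Lemma~\ref{lem1} is well-defined; everything else is a mechanical accounting of contributions per coordinate.
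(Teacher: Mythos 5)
Your proof is correct and follows essentially the same route as the paper's: isolate the doubled contribution $\frac{2}{3^{i-1}}$ at index $i$, note that indices $k<i$ contribute exactly one copy each, and use Lemma~\ref{lem1} to show the extra copy of $\left(\frac{1}{3}\right)^{i-1}$ strictly exceeds the entire remaining tail of the geometric series. Your version is merely more explicit (the coordinate-wise count $c_k$ and the trivial $i=m$ case), which is a welcome tightening rather than a different argument.
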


\begin{proof}
 Here $i$ is the smallest index such that $u_i=v_i=0$ and for all $k<i$ exactly one of $u_k$ and $v_k$ is equal to zero. Then, by definition, both $\ell_i=\left(\frac{1}{3}\right)^{i-1}$ and $\ell'_i=\left(\frac{1}{3}\right)^{i-1}$ would contribute to the sum of segment lengths in $L_u$ and $L_v$. Following Lemma~\ref{lem1},
 $$\left(\frac{1}{3}\right)^{i-1}>\sum_{j=i..m-1}\left(\frac{1}{3}\right)^{j}.$$
 In other words, the value of this additional copy of
 $\left(\frac{1}{3}\right)^{i-1}$ is greater than the sum of all possible shorter segment lengths, each appearing exactly once, which is necessary to achieve a sum of $C$.
 Hence there is no bridge $pq$ with a length of zero connecting $T_1$ and $T_2$, and the lemma is proven.
\end{proof}

\begin{lemma}
 Let $u\in A, v\in B$; moreover, $u=(u_1,u_2,\cdots,u_m)$ and
 $v=(v_1,v_2$, $\cdots,v_m)$ with $u_i,v_i\in\{0,1\}$. If there exists an $i$ with $u_i=v_i=1$ and for all $k<i$ exactly one of $u_k$ and $v_k$ is equal to zero, then 
 the sum of segment lengths in $L_u$ and $L_v$ is smaller than $C$. 
 \label{lem3}
\end{lemma}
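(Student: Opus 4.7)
The plan is to bound from above the total $S=\sum_{k=1}^m(\ell_k+\ell'_k)$ and show this bound is strictly less than $C$. The first step is a per-coordinate case analysis: since $u_k,v_k\in\{0,1\}$, the pairwise contribution $\ell_k+\ell'_k$ is $0$ when $u_k=v_k=1$, it is $\left(\tfrac{1}{3}\right)^{k-1}$ when exactly one of $u_k,v_k$ equals $0$, and it is $2\left(\tfrac{1}{3}\right)^{k-1}$ when $u_k=v_k=0$.

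Using the hypothesis, for every $k<i$ exactly one of $u_k,v_k$ is $0$, so position $k$ contributes exactly $\left(\tfrac{1}{3}\right)^{k-1}$; at $k=i$ we have $u_i=v_i=1$, so the contribution is $0$; for $k>i$ no constraint is assumed, so each position contributes at most $2\left(\tfrac{1}{3}\right)^{k-1}$. Hence
\[
S \;\le\; \sum_{k=1}^{i-1}\left(\tfrac{1}{3}\right)^{k-1} \;+\; 2\sum_{k=i+1}^{m}\left(\tfrac{1}{3}\right)^{k-1}.
\]
Subtracting this from $C=\sum_{k=1}^{m}\left(\tfrac{1}{3}\right)^{k-1}$, the inequality $S<C$ is equivalent to
\[
\sum_{k=i+1}^{m}\left(\tfrac{1}{3}\right)^{k-1} \;<\; \left(\tfrac{1}{3}\right)^{i-1},
\]
which is exactly Lemma~\ref{lem1} applied at index $i-1$ after the reindexing $j=k-1$. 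The boundary case $i=m$ makes the right-hand sum empty and the inequality is trivial.

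I expect no real obstacle beyond keeping the indexing straight and noticing the clean mirror with Lemma~\ref{lem2}: whereas Lemma~\ref{lem2} forces an extra copy of $\left(\tfrac{1}{3}\right)^{i-1}$ that pushes the sum above $C$, here the coincidence $u_i=v_i=1$ removes a copy of $\left(\tfrac{1}{3}\right)^{i-1}$ that would have been needed to reach $C$, and Lemma~\ref{lem1} certifies that no ``doubled'' contributions from larger indices can compensate for this loss.
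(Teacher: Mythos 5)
Your proof is correct and follows essentially the same route as the paper's: account for the per-coordinate contributions (exactly one copy of $\left(\tfrac{1}{3}\right)^{k-1}$ for $k<i$, none at $k=i$, at most two for $k>i$) and then invoke the geometric-decay bound of Lemma~\ref{lem1} to show the later terms cannot compensate for the missing $\left(\tfrac{1}{3}\right)^{i-1}$. The only difference is cosmetic: the paper invokes the sharper inequality $\left(\tfrac{1}{3}\right)^{i-1}>2\sum_{j\ge i}\left(\tfrac{1}{3}\right)^{j}$ extracted from the \emph{proof} of Lemma~\ref{lem1}, whereas your bookkeeping shows the stated form of Lemma~\ref{lem1} already suffices, which is arguably cleaner.
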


\begin{proof}
 Symmetrically, $i$ is the smallest index such that $u_i=v_i=1$ and for all $k<i$ exactly one of $u_k$ and $v_k$ is equal to zero. Again, by definition, both $\ell_i=\left(\frac{1}{3}\right)^{i-1}$ and $\ell'_i=\left(\frac{1}{3}\right)^{i-1}$ would not contribute to the sum of segment lengths in $L_u$ and $L_v$. Again, by the proof of Lemma~\ref{lem1},
 $$\frac{1}{2}\cdot\left(\frac{1}{3}\right)^{i-1}>\sum_{j=i..m-1}\left(\frac{1}{3}\right)^{j},$$ or
 $$\left(\frac{1}{3}\right)^{i-1}>2\cdot\sum_{j=i..m-1}\left(\frac{1}{3}\right)^{j}.$$ 
 In other words, for all possible segment lengths shorter than $\left(\frac{1}{3}\right)^{i-1}$, each possibly appearing twice in $L_u$ as well as in $L_v$, their sum would not make up for $\left(\frac{1}{3}\right)^{i-1}$ which is needed to have
 a sum of $C$ for the segments in $L_u$ and $L_v$.
 Again, there cannot be a bridge $pq$ with a length of zero
 connecting $T_1$ and $T_2$.
\end{proof}

\begin{theorem}
Given two trees with a total of $N$ vertices and non-negative constants $C_1$ and $C_2$, the one-bridge decision problem cannot be solved in $O(N^{2-\epsilon})$ time unless the SETH fails.
\label{thm2}
\end{theorem}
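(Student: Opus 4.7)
The plan is to complete the ``iff'' claim asserted just before Lemma~\ref{lem1}, since the reduction already described builds, in $O(N)$ time, an instance of one-bridge with $O(N)$ vertices and constants $C_1=0$, $C_2=C+2$ from a COV instance of total size $N=mn$. Combined with Theorem~\ref{thm1}, the biconditional will deliver the theorem.

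For the forward direction I would take a complementary binary pair $u\in A$, $v\in B$ and note that complementarity forces $\ell_i+\ell'_i=(1/3)^{i-1}$ for every $i\in[m]$, so the combined segment-length sum of $L_u$ and $L_v$ is exactly $C$. Because $L_u$ descends along the $Y$-axis from $(0,C)$ and $L_v$ climbs upward from $(0,0)$, their terminal vertices therefore coincide; choosing $p,q$ to be these coinciding terminals gives $|pq|=0=C_1$, and taking $x=(0,C+1)$, $y=(0,-1)$ yields $\delta_{T_1}(x,p)+|pq|+\delta_{T_2}(q,y)=(1+\sum_i\ell_i)+0+(\sum_i\ell'_i+1)=C+2=C_2$.

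For the reverse direction I would argue the contrapositive. Suppose no complementary binary pair exists. Since $C_2<3.5$, any leaf that is reached through a segment of length $4$ already contributes more than $C_2$, so the leaves $x\in T_1$, $y\in T_2$ witnessing the specific value must sit on $L_u$, $L_v$ coming from binary vectors; moreover $p=q$ must lie on the $Y$-axis, on such an $L_u$ and on such an $L_v$. For any non-complementary binary pair $u,v$, let $i$ be the smallest index where $u_i=v_i$: Lemma~\ref{lem2} (if the common value is $0$) or Lemma~\ref{lem3} (if it is $1$) shows that no pair of prefix sums drawn from $L_u$ and $L_v$ can hit $C$, so $L_u$ and $L_v$ share no vertex and no qualifying bridge exists.

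Finally, an $O(N^{2-\epsilon})$-time algorithm for one-bridge would, via the linear-size reduction, yield a subquadratic algorithm for COV, contradicting Theorem~\ref{thm1}. The step I expect to be trickiest is the overshooting case $u_i=v_i=0$, in which $\sum_i\ell_i+\sum_i\ell'_i>C$ and the two paths cross each other along the $Y$-axis; there one must rule out \emph{every} pair of non-terminal prefix sums summing to $C$, and the geometric-series bound in Lemma~\ref{lem1} is exactly the tool that makes this exclusion work.
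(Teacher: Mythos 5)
Your proposal follows essentially the same route as the paper: it completes the ``iff'' claim for the COV-to-one-bridge reduction, using Lemma~\ref{lem2} and Lemma~\ref{lem3} (both resting on the geometric-series bound of Lemma~\ref{lem1}) for the contrapositive of the if-direction, and then invokes Theorem~\ref{thm1} via the linear-size reduction exactly as the paper does. If anything, your explicit worry about bridges landing at \emph{interior} prefix points of $L_u$ and $L_v$ (rather than only at their terminals) is a point the paper's own proof passes over more quickly, so your sketch is, in that one respect, the more careful of the two.
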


\begin{proof}
As claimed earlier, COV has a solution if and only if between $T_1$ and $T_2$ there is a bridge $pq$ with length $C_1=0$ and with
a specific solution value of $C_2=C+2$. We finish to  prove this ``iff'' relation, focusing on the ``if-part'' as the ``only-if'' is trivial.

Suppose that there is a bridge $pq$ between $T_1$ and $T_2$ with length $C_1=0$ and with
a specific solution value of $C_2=C+2$. Since $C<1.5$, we have $C+2=C_2<3.5$. Hence the only possible specific solution value $C_2$ in the one-bridge instance is the (tree) distance between $(0,C+1)$ and $(0,-1)$, as all non-vertical segments have length 4. As $C_1=0$ (i.e., there is a bridge of length zero between $T_1$ and $T_2$), the sum of segment lengths in $L_u$ and $L_v$, which are paths in $T_1$ and $T_2$ respectively, must be exactly $C$. Following Lemma~\ref{lem2} and Lemma~\ref{lem3}, for that to happen, each component in the geometric sequence $(1,\frac{1}{3},\frac{1}{3^2},\cdots,\frac{1}{3^{m-1}})$ must appear exactly once in $L_u$ and $L_v$. Recall that the total number of points in $T_1$ and $T_2$ is $O(N)$. Then, if one-bridge could be solved in $O(N^{2-\epsilon})$ time the Complementary Orthogonal Vectors problem would also be solved in $O(N^{2-\epsilon})$ time. But this is a contradiction to Theorem~\ref{thm1}. 
\end{proof}

\subsection{Implication to the 3-SUM Conjecture}

The 3-SUM problem is defined as follows: given a set $S$ of $n$ integers, decide if there are
$a,b,c\in S$ such that $a+b+c=0$. The problem was initially
posed by Gajentaan and Overmars \cite{DBLP:journals/comgeo/GajentaanO95}, who conjectured that the problem has a lower bound of $\Omega(n^2)$. Then the problem was solved in $o(n^2)$ time by Baran et al. \cite{DBLP:journals/algorithmica/BaranDP08} (later even for real numbers \cite{DBLP:conf/focs/JorgensenP14}). Hence the new conjecture for 3-SUM is by Patrascu, which is that 3-SUM has a lower bound of $\Omega(n^{2-o(1)})$ \cite{DBLP:conf/stoc/Patrascu10}. We show that when the integers in $S$ are large, i.e., each has $\Omega(\log n)$ decimal digits, then the 3-SUM conjecture is true (in fact, it is almost matches the original conjecture by Gajentaan and Overmars). 

We follow the set-up for Theorem~\ref{thm1}.
Recall that $\phi$ is an One-in-three SAT instance composed of $n$ variables and $m$ disjunctive clauses where the $i$-th clause $F_i$ contains three literals and is in the form of $(x_{i,1}\vee x_{i,2}\vee x_{i,3})$. The problem is to determine for $i=1..m$, exactly one of the three literals in each clause $F_i$, i.e., $x_{i,1}, x_{i,2}$ and $x_{i,3}$, is assigned {\tt TRUE}.

We arbitrarily partition the variables in $\phi$ into two equal parts $V_A$ and $V_B$ (we can assume that $n$ is even, though it does not really matter for the result). Each of the $(m+2)$-vectors $u\in A$ is determined by an assignment $\alpha_A$ of $V_A$ (i.e., a partial assignment of the variables in $\phi$), where
$$u=(u_1,u_2,\cdots,u_{m+2}),$$
and, for $1\leq i\leq m$, 
\begin{equation}
    u_i=
    \begin{cases}
        0 & \text{if}~F_i ~\text{is not satisfied by}~ \alpha_A,\\
        1 & \text{if}~F_i ~\text{is satisfied with exactly one TRUE literal by}~ \alpha_A, \\
        2 & \text{if}~F_i ~\text{is satisfied with at least two TRUE literals by}~ \alpha_A.
    \end{cases}
\end{equation}
For $i=m+1$ and $i=m+2$, we define
\begin{equation}
    u_{m+1}=
    \begin{cases}
        1 & \text{if}~u~\text{is in}~ V_A,\\
        0 & \text{otherwise}.
    \end{cases}
\end{equation}

\begin{equation}
    u_{m+2}=
    \begin{cases}
        1 & \text{if}~v~\text{is in}~ V_B,\\
        0 & \text{otherwise}.
    \end{cases}
\end{equation}
Similar to (1), we could define
an $(m+2)$-vector $v\in B$ determined by an assignment $\alpha_B$ of $V_B$.
We define the integer $v^*(m+2)=\overbrace{11\cdots 1}^{m+2}$ (which can still be viewed as a vector
of $m+2$ ones).

Let $S$ be a set of $2^{n/2+1}+1$ large integers, each with $m+2$ decimal digits; more precisely, let $S=A\cup B\cup \{-v^*(m+2)\}$.
Then, similar to Williams' idea, we can claim that $\phi$ has
a valid truth assignment if and only if there are vectors (integers) $u\in A$ and $v\in B$ such that $u+v+(-v^*(m+2))=0$ (or, equivalently, if there are $u,v\in S$ such that $u+v+(-v^*(m+2))=0$, i.e., 3-SUM has a solution). As there are $2^{n/2}$ assignments for $V_A$ and $V_B$ respectively, the above reduction takes $2^{n/2}\cdot O(m)$ time. If 3-SUM could be computed in $O(N^{2-\epsilon})$ time, where $N$ is the number of input integers for 3-SUM, One-in-three SAT could be solved in $2^{n-n\epsilon/2}\cdot O(m^{2-\epsilon})$ time --- which would fail the SETH. (Note that when $N$ large input integers are given for 3-SUM, each of them needs to have $m+2=\Omega(\log N)$ decimal digits.) We hence have the following theorem.

\begin{theorem}
The 3-SUM problem with ~$N$ large integers, each with $\Omega(\log N)$ decimal digits, cannot be solved in $O(N^{2-\epsilon})$ time unless the SETH
fails.
\end{theorem}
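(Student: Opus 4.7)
The plan is to instantiate the construction already laid out immediately before the statement. Partition the $n$ variables of a One-in-three SAT instance $\phi$ with $m = \Omega(n)$ clauses into two halves $V_A, V_B$ of size $n/2$; for every partial assignment of $V_A$ build one $(m+2)$-digit integer via equations (4)--(6), producing the set $A$ of size $2^{n/2}$, and do the same for $V_B$ to obtain $B$; then let $S = A \cup B \cup \{-v^{*}(m+2)\}$, so that $|S| = N = 2 \cdot 2^{n/2} + 1$ and every integer in $S$ carries exactly $m + 2 = \Theta(n) = \Theta(\log N)$ decimal digits, matching the large-integer hypothesis.

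The core correctness claim is that 3-SUM finds a zero-sum triple in $S$ if and only if $\phi$ is one-in-three satisfiable. Since every digit of each vector in $A \cup B$ lies in $\{0,1,2\}$, any pairwise column sum is at most $4$, so decimal addition of $u \in A$ and $v \in B$ produces no carry; hence $u + v = v^{*}(m+2)$ in $\mathbb{Z}$ if and only if $u_i + v_i = 1$ at every position $i$. The two auxiliary positions $m+1$ and $m+2$ act as tags: the constraints $u_{m+1} + v_{m+1} = 1$ and $u_{m+2} + v_{m+2} = 1$ force the two non-$(-v^{*})$ summands to come one from $A$ and the other from $B$, ruling out both intra-class pairings and the degenerate use of $-v^{*}(m+2)$ twice. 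For $1 \leq i \leq m$, the per-digit equation $u_i + v_i = 1$ forces $\{u_i,v_i\} = \{0,1\}$, which by equation (4) means that exactly one of $\alpha_A, \alpha_B$ contributes a single TRUE literal to $F_i$ while the other contributes none; that is, $F_i$ receives exactly one TRUE literal under $\alpha_A \cup \alpha_B$. The converse is immediate: any one-in-three satisfying assignment splits into $\alpha_A, \alpha_B$ whose encodings obey all these digit equalities.

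For the running-time consequence I would then mimic the bookkeeping used for the COV-to-one-bridge reduction. The construction of $S$ takes $2^{n/2} \cdot O(m)$ time. Assuming a hypothetical $O(N^{2-\epsilon})$-time algorithm for 3-SUM on instances with $\Omega(\log N)$-digit integers, substituting $N = \Theta(2^{n/2})$ yields a One-in-three SAT algorithm with total running time $2^{n - n\epsilon/2} \cdot \mathrm{poly}(n)$; absorbing the polynomial factor into any slack $2^{\delta n}$ with $\delta < \epsilon/2$ contradicts the SETH, exactly as in the proof of Theorem~\ref{thm1}.

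The main delicate step, and the one I would scrutinize most carefully, is the combined \emph{no-carry plus tagging} argument. The digit bound of $2$ in each entry and the two sentinel positions must together guarantee that \emph{every} 3-SUM witness in $S$ is a genuine $(A, B, -v^{*}(m+2))$-triple, never a spurious configuration such as two elements of $A$ cancelling through carries or $-v^{*}(m+2)$ being counted twice. Once those possibilities are eliminated, the translation between the per-digit equation and the ``exactly one TRUE literal per clause'' condition is routine.
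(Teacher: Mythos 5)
Your proposal follows the paper's own route exactly: the same partition into $V_A,V_B$, the same $(m+2)$-digit encodings via equations (4)--(6), the same set $S=A\cup B\cup\{-v^{*}(m+2)\}$, and the same SETH bookkeeping with $N=\Theta(2^{n/2})$. It is correct, and in fact your explicit no-carry argument (column sums at most $4$) together with the tag-digit analysis ruling out intra-class triples and the double use of $-v^{*}(m+2)$ supplies the verification that the paper only asserts ``similar to Williams' idea.''
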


We could easily extend the above ideas to $k$-SUM: just partition the variables in $\phi$ into $k$ subsets $V_i, 1\leq i\leq k$, each containing at most $n/k$ variables.
Then from each $V_i, 1\leq i\leq k$, construct a set of integers with 
$m+k$ decimal digits where the first $m$ digits are defined as in equation (1) while the $(m+j)$-th digit is one for $j=i$, and zero for all other $1\leq j\leq k$. However, note that, among the first $m$ decimal digits,
it is possible to obtain a ternary sub-vector of 1's due to possible carries. For instance, we could obtain a 2-vector (1,1) by adding six ternary 2-vectors, $(0,2)+(0,2)+(0,2)+(0,2)+(0,2)+(0,1)$, due to a carry from adding five 2's. Hence,
we need $k\leq 6$ to enforce the relation that $\phi$ has a valid truth assignment if and only if there are $v_i\in V_i$ such that $v_1+v_2+\cdots+v_{k-1}+(-v^*(m+k-1))=0$. Therefore, we have

\begin{corollary}
For $k\leq 6$, the $k$-SUM problem with $N$ large integers, each with $\Omega(\log N)$ decimal digits, cannot be solved in $O(N^{k-\epsilon})$ time unless the SETH fails.
\end{corollary}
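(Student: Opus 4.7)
\noindent\emph{Proof Proposal.} The plan is to generalize the preceding $3$-SUM reduction in the natural way. Starting from a One-in-three SAT instance~$\phi$ with $n$ variables and $m$ clauses, I would arbitrarily partition the variables into $k-1$ equal-sized blocks $V_1,\dots,V_{k-1}$, each of size $n/(k-1)$, and let $-v^{*}(m+k-1)$ play the role of the $k$-th summand. For each partial assignment $\alpha_j$ of $V_j$, build an integer $u^{(j)}$ of $m+k-1$ decimal digits whose first $m$ digits follow the recipe of equation~(1) (namely $0$, $1$, or $2$ according to whether $\alpha_j$ satisfies clause $F_i$ with zero, one, or at least two TRUE literals), and whose last $k-1$ digits form an ``indicator block'' with a $1$ in position $j$ and $0$ in all other indicator positions. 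Collect these into sets $A_1,\dots,A_{k-1}$ and set $S=A_1\cup\dots\cup A_{k-1}\cup\{-v^{*}(m+k-1)\}$, whose total size is $N=(k-1)\cdot 2^{n/(k-1)}+1$.

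The correctness statement I would then prove is that $\phi$ has a valid one-in-three truth assignment if and only if $S$ contains $k$ integers summing to zero, equivalently if and only if there exist $v_j\in A_j$ with $v_1+\cdots+v_{k-1}=v^{*}(m+k-1)$. The indicator block is engineered so that any such $k$-tuple must contain exactly one element from each $A_j$: duplicating an $A_j$ would push its indicator digit above~$1$, and omitting one would leave its indicator digit at~$0$. Once the ``one per block'' structure is imposed, matching the all-ones clause portion of $v^{*}(m+k-1)$ is equivalent to every clause receiving exactly one TRUE literal in total from the chosen partial assignments.

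The main obstacle is the carry analysis in decimal addition, and this is exactly what forces the condition $k\le 6$. At every indicator position the digit sum is exactly~$1$, producing no carry. At every clause position the digit sum over the $k-1$ chosen summands is at most $2(k-1)\le 10$ for $k\le 6$; because a clause contains only three literals distributed across the partition blocks, the raw digit sum is in fact bounded by a small constant, which combined with the bound on possible incoming carries rules out any outgoing carry. A short case distinction then shows that a clause digit of $v_1+\cdots+v_{k-1}$ equals~$1$ exactly when one $\alpha_j$ makes one literal TRUE and the others contribute nothing, which is precisely the one-in-three condition. The cautionary example $(0,2)+(0,2)+(0,2)+(0,2)+(0,2)+(0,1)=(1,1)$ given in the discussion above shows why carry control breaks at $k=7$, and therefore why the statement is restricted to $k\le 6$.

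The running-time bookkeeping is parallel to the $3$-SUM argument: the reduction runs in time $2^{n/(k-1)}\cdot O(m)$, produces $N=\Theta(2^{n/(k-1)})$ integers, and each integer has $m+k-1=\Omega(\log N)$ decimal digits, meeting the ``large integer'' hypothesis. A hypothetical $O(N^{k-\epsilon})$ algorithm for $k$-SUM would then yield an algorithm for One-in-three SAT of running time $N^{k-\epsilon}\cdot\mathrm{poly}(m)$, which, once combined with the standard SETH-to-One-in-three-SAT reduction and an appropriate choice of $\epsilon$ in the exponent, breaks the SETH barrier and so establishes the conditional lower bound claimed in the corollary.
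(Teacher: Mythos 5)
Your construction is the same as the paper's in spirit (clause digits as in equation~(1), an indicator block to force one element per variable block, the carry discussion with the five-twos-plus-a-one example yielding the restriction $k\le 6$), but there is a genuine gap in the final exponent bookkeeping. You partition the variables into $k-1$ blocks of size $n/(k-1)$, so $N=\Theta\bigl(2^{n/(k-1)}\bigr)$ and a hypothetical $O(N^{k-\epsilon})$ algorithm runs in time $2^{\frac{k-\epsilon}{k-1}n}\cdot\mathrm{poly}(m)$. Since $\frac{k-\epsilon}{k-1}\ge 1$ for every $\epsilon\le 1$, this is never faster than the trivial $2^{n}$ enumeration, so no contradiction with SETH follows; no ``appropriate choice of $\epsilon$'' can repair this, because the statement quantifies over all $\epsilon>0$ and the small values are exactly the ones your reduction cannot handle. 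As set up, your argument only rules out $O(N^{(k-1)-\epsilon})$-time algorithms for $k$-SUM, which is weaker than the corollary's claim.

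The paper instead partitions the variables into $k$ blocks of size $n/k$ (with an indicator block of length $k$ and integers of $m+k$ digits), so that $N=\Theta\bigl(k\cdot 2^{n/k}\bigr)$ and $N^{k-\epsilon}=2^{(1-\epsilon/k)n}\cdot\mathrm{poly}(m)$, which does beat brute force and hence contradicts SETH via Theorem~\ref{thm1}'s setup. To make this consistent with a $k$-SUM instance you should then view the all-ones number $v^*$ as the prescribed target value that the $k$ chosen elements (one per block, enforced by the indicator digits) must sum to, rather than adjoining $-v^*$ to the set as an extra summand as in the $3$-SUM case --- otherwise the summand count is off by one. The rest of your correctness argument (indicator digits forcing one element per block, the no-carry analysis on the clause digits, and the equivalence ``digit $1$ iff the clause gets exactly one TRUE literal'') carries over verbatim to that setting, so the fix is purely in the choice of block count and in redoing the running-time calculation.
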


\subsection{A simple approximation}    

Even though we cannot solve the one-bridge problem in $o(n^2)$ time (neither we could for the optimal bridge problem so far), a subquadratic approximation algorithm for the optimal bridge problem can be easily designed as follows.
\begin{enumerate}
    \item Compute the closest pair $(p,q)$, where $p\in V(T_1), q\in V(T_2)$.
    \item Return $\max_{x\in V(T_1)}\delta_{T_1}(x,p)+|pq|+\max_{y\in V(T_2)}\delta_{T_2}(q,y)$ and the corresponding path
    between $x$ and $y$.
\end{enumerate}
The closest pair between $V(T_1)$ and $V(T_2)$ can be computed in $O(n\log n)$ time using a standard method, say, Voronoi diagrams. Note that $\delta_{T_1}(x,p)$
(resp. $\delta_{T_2}(q,y)$) can be computed in linear time by running BFS starting at $p$ (resp. $q$) on the tree $T_1$ (resp. $T_2$); i.e., we do not need the $O(n^2)$ time preprocessing as for the exact algorithm.

\begin{theorem}
The optimal bridge problem can be solved in $O(n^2)$ time
and can be approximated in $O(n\log n)$ time with a factor of 2 (and the factor is tight).
\end{theorem}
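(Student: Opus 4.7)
The two running-time claims are short. The $O(n^2)$ exact bound follows immediately from the enumeration scheme sketched before the theorem: after the $O(n^2)$-time preprocessing that tabulates all-pairs distances in $T_1$ and $T_2$ and the per-vertex eccentricities, each candidate bridge $(p,q)$ is evaluated in constant time. For the approximation, the plan is to compute the closest pair $(p,q)$ between $V(T_1)$ and $V(T_2)$ in $O(n\log n)$ time via Voronoi diagrams, and then run a single BFS from $p$ in $T_1$ and from $q$ in $T_2$ to obtain $\max_x\delta_{T_1}(x,p)$ and $\max_y\delta_{T_2}(q,y)$ in $O(n)$ time each; no global preprocessing is needed, giving the $O(n\log n)$ bound.

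For the factor-$2$ guarantee, let $(p^*,q^*)$ denote an optimal bridge with constrained diameter $D^*$ and $(p,q)$ the closest pair returned by the algorithm. Two facts drive the analysis: (a) $|pq|\leq |p^*q^*|$ by choice of closest pair, and (b) applying the constrained-diameter formula of the optimum to the endpoints $x=p$, $y=q$ (both are vertices of their respective trees) gives $\delta_{T_1}(p,p^*)+|p^*q^*|+\delta_{T_2}(q^*,q)\leq D^*$. For arbitrary $x\in V(T_1)$ and $y\in V(T_2)$, I would apply the tree triangle inequality twice, add $|pq|$, and algebraically regroup by adding and subtracting $|p^*q^*|$ so that the expression decomposes into $[\delta_{T_1}(x,p^*)+|p^*q^*|+\delta_{T_2}(q^*,y)]+[\delta_{T_1}(p^*,p)+|pq|+\delta_{T_2}(q,q^*)]-|p^*q^*|$. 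The first bracket is at most $D^*$ by definition of $D^*$, and the second is at most $D^*$ by (a) combined with (b). Taking the maximum over $x,y$ yields $D_{\mathrm{alg}}\leq 2D^*-|p^*q^*|\leq 2D^*$.

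For tightness, I would construct an instance in which the closest pair is forced into the extreme leaves of both trees while the optimum joins their midpoints. Concretely, take $T_1$ as the three-vertex path $a_1a_2a_3$ with $a_1=(0,L)$, $a_2=(0,0)$, $a_3=(0,-L)$, and $T_2$ as the three-vertex path $b_1b_2b_3$ with $b_1=(\epsilon,L)$, $b_2=(2\epsilon,0)$, $b_3=(\epsilon,-L)$. The unique closest pair is $(a_1,b_1)$ at distance $\epsilon$; the bridge it induces has constrained diameter $2L+\epsilon+2\sqrt{\epsilon^2+L^2}\to 4L$, while the bridge $(a_2,b_2)$ of length $2\epsilon$ achieves constrained diameter $L+2\epsilon+\sqrt{\epsilon^2+L^2}\to 2L$. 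Letting $\epsilon/L\to 0$ drives the approximation ratio to $2$, so the factor is tight.

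The main obstacle is the regrouping step in the factor-$2$ analysis: bounding $\delta_{T_1}(x,p)$ and $\delta_{T_1}(p^*,p)$ against $D^*$ separately only yields a factor of $4$. The right viewpoint is that after the triangle inequalities, the resulting expression contains, simultaneously, the optimal-bridge path from $x$ to $y$ and the optimal-bridge path from $p$ to $q$, so that two copies of $D^*$ can be charged without double counting once $|pq|\leq|p^*q^*|$ is invoked.
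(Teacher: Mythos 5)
Your proof is correct, but the approximation analysis takes a genuinely different route from the paper's. The paper bounds $\max_x\delta_{T_1}(x,p)$ by the tree diameter $D(V(T_1))$ and then uses the eccentricity--diameter relation $\delta_{T_1}(x^*,p^*)\geq D(V(T_1))/2$ (and symmetrically for $T_2$), together with $|pq|\leq|p^*q^*|$, to charge $2\,\mathrm{OPT}$. You instead reroute the approximate path through the optimal bridge endpoints via the triangle inequality and regroup so that the expression splits into the optimal-bridge path from $x$ to $y$ plus the optimal-bridge path from $p$ to $q$ minus $|p^*q^*|$; the closest-pair property $|pq|\leq|p^*q^*|$ then lets both brackets be charged against $\mathrm{OPT}$. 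The two arguments are of essentially equal strength (both in fact yield $\mathrm{APP}\leq 2\,\mathrm{OPT}-|p^*q^*|$), but the paper's is a line shorter, while yours is the classical Cai--Xu--Zhu-style analysis and makes explicit where the factor $4$ would creep in if one bounded the two brackets naively --- a point the paper leaves implicit. Your tightness instance is also different from (and more self-contained than) the paper's figure-based one, and it does verify: the only quibble is that the closest pair in your construction is not unique ($(a_1,b_1)$ ties with $(a_3,b_3)$ at distance $\epsilon$), though either choice yields the same ratio, and a tiny perturbation of $b_3$ restores uniqueness.
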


\begin{proof}
 The running times are straightforward. For the approximation
 factor, if ${OPT}$ is the optimal solution value, $(p^{*},q^{*})$ is the optimal bridge and $(p,q)$ is the approximate solution computed. Also, let $x^*\in V(T_1)$ and $y^*\in V(T_2)$ be chosen such that $\delta_{T_1}(x,p^*)+|p^*q^*|+\delta_{T_2}(q^*,y)$ is maximized. Then,
 $$\max_{x\in V(T_1)}\delta_{T_1}(x,p)\leq D(V(T_1))\,,$$ where $D(V(T_1))$ is the diameter of the tree~$T_1$.
 Similarly, $\max_{y\in V(T_2)}\delta_{T_2}(q,y)\leq D(V(T_2))$.
 On the other hand, let $\text{OPT}=\delta_{T_1}(x^*,p^*)+|p^*q^*|+\delta_{T_2}(q^*,y^*)$.
 Then, $\delta_{T_1}(x^*,p^*)\geq D(V(T_1))/2$ and $\delta_{T_2}(q^*,y^*)\geq D(V(T_2))/2$.
 Consequently, the approximation solution $\text{APP}$ satisfies

 \begin{align*}
    \text{APP} & = \max_{x\in V(T_1)}\delta_{T_1}(x,p) +|pq|+\max_{y\in V(T_2)}\delta_{T_2}(q,y) \\
    &\leq D(V(T_1))+|pq|+D(V(T_2))\\
     &\leq 2\delta_{T_1}(x^*,p^*)+|p^*q^*|+2\delta_{T_2}(q^*,y^*)\\ 
        & \leq 2\cdot\text{OPT}.
    \end{align*}
\end{proof}

To see the tightness of the approximation factor, we refer to
Figure~\ref{fig:fig1}. In this figure, the approximate bridge
is $(c,f)$, giving a solution value of $4n+1-\varepsilon$. The optimal bridge is $(b,e)$, resulting a solution value of $2n+1$. Clearly,
$$\lim_{n\to\infty}\frac{4n+1-\varepsilon}{2n+1}=2.$$ 

\begin{figure}
\psfrag{1}{$1$}
\psfrag{n/2}{$n/2$}
\psfrag{1-e}{$1-\varepsilon$}
\centering
\includegraphics[width=0.18\textwidth]{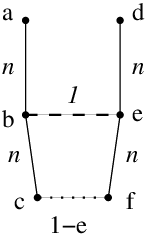}
\caption{\label{fig:fig1}An example where two paths $(a,b,c)$ and $(d,e,f)$ need to be connected into a spanning tree. $be$ and $cf$ are positioned by rotating a segment of length $n$ at $b$ and $e$ slightly to achieve that $|cf|=1-\epsilon$. Adding $(b,e)$ would result in an optimal diameter of $2n+1$; with the greedy method of adding the shortest edge between the two paths, $(c,f)$ is added to achieve a diameter of $4n+1-\varepsilon$. (The distances in the figure are not measured geometrically, due to the space constraint.) }
\end{figure}

We note that for the geometric version when two convex polygons are given, the approximation algorithm can make use of the centers of the polygons; hence the factor can be
improved to $\sqrt{2}$ \cite{DBLP:journals/comgeo/AhnCS03}.
Nonetheless, if we need to connect $k+1$ tree road networks
using $k$ edges to minimize the diameter of the resulting tree, combining our approximation with a method to \cite{DBLP:journals/comgeo/AhnCS03}
can be used to obtain a factor of~4.

\section{The twin bridges problem}

In this section we focus on the twin bridges problem on two road networks which are given as disjoint (geometric) trees.
Given two disjoint trees $T_1$ and $T_2$ with a total of $n$
vertices (which are points in the plane), we aim to add pairs
$(p_1,q_1)$ and $(p_2,q_2)$ with $p_i\in V(T_1), q_i\in V(T_2), i\in \{1,2\}$, and $p_1\neq p_2, q_1\neq q_2$, as new edges/bridges into $T_1$ and $T_2$ to
form a road network $T''=T_1\cup T_2\cup\{(p_1,q_1),(p_2,q_2)\}$, the goal is to minimize the maximum distance between nodes of $T''$ 
whose corresponding path must pass through one or two of these bridges. We loosely call this minimum distance the {\em constrained} diameter of
$T''$. See Figure~\ref{fig:fig2} for an example. (Note that since $T_1$ and $T_2$ are initially disconnected, at 
least one edge/bridge between $T_1$ and $T_2$ must be added. Consequently, after the first bridge is added, if we do not require the second edge 
to be a bridge between $T_1$ and $T_2$, then adding the second edge in $T_1$ or $T_2$ is a trivial problem. This case is not included in our 
twin bridges problem.)

First of all, notice that $T''$ is not a tree anymore. The constrained diameter of $T''$ can appear in four cases:
\begin{enumerate}
    \item It is the maximum of $\delta_{T_1}(x,p_1)+|p_1q_1|+\delta_{T_2}(q_1,y)$, with $x\in V(T_1)$ and $y\in V(T_2)$;
    \item It is the maximum of $\delta_{T_1}(x,p_2)+|p_2q_2|+\delta_{T_2}(q_2,y)$, with $x\in V(T_1)$ and $y\in V(T_2)$;
    \item It is the maximum of $\delta_{T_1}
    (x,p_1)+|p_1q_1|+\delta_{T_2}(q_1,q_2)+|q_2p_2|+\delta_{T_1}(p_2,z)$, with $x,z\in V(T_1)$; and
    \item It is the maximum of $\delta_{T_2}
    (y,q_1)+|q_1p_1|+\delta_{T_1}(p_1,p_2)+|p_2q_2|+\delta_{T_2}(q_2,w)$, with $y,w\in V(T_2)$.
\end{enumerate}
\begin{figure}
\psfrag{x}{$x$}
\psfrag{y}{$y$}
\psfrag{z}{$z$}
\psfrag{w}{$w$}
\psfrag{x'}{$x'$}
\psfrag{z'}{$z'$}
\psfrag{p'}{$p'$}
\psfrag{p"}{$p''$}
\psfrag{p1}{$p_1$}
\psfrag{p2}{$p_2$}
\psfrag{q1}{$q_1$}
\psfrag{q2}{$q_2$}
\psfrag{T1}{$T_1$}
\psfrag{T2}{$T_2$}
\centering
\includegraphics[width=0.50\textwidth]{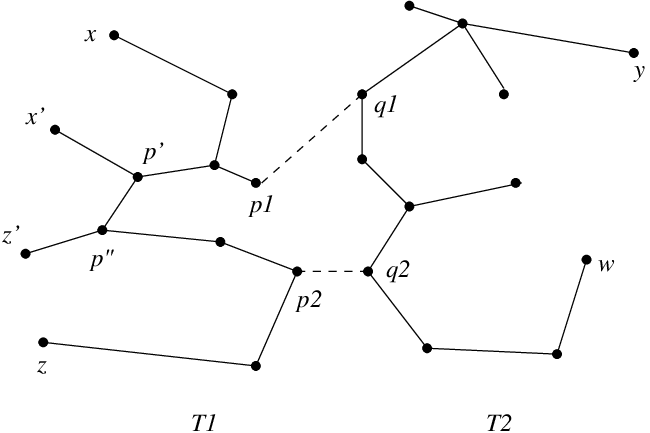}
\caption{\label{fig:fig2}An illustration for the twin bridges problem.}
\end{figure}

Note that Case 1 and Case 2 are similar to the optimal bridge problem in some way; in fact, it is a matter of partitioning $T_i$ into $T_{i,1}$ and $T_{i,2} (i=1,2)$ by deleting an edge in $T_i$ such the two pairs $(T_{i,k}, T_{j,l})$, with $\{i,j\}=\{1,2\}$ and $k,l\in\{1,2\}$, form instances of the optimal bridge problem --- with the resulting optimal
bridges $p_1q_1$ and $p_2q_2$ respectively. (See also the next lemma for the details.) On the other hand, Case 3 and Case 4 are to improve the diameter of $T_1$ (resp. $T_2$)
using both the bridges $p_1q_1$ and $p_2q_2$. (In the optimal bridge problem, it is impossible to improve the diameter of $T_1$ and $T_2$ using the bridge $pq$.
The reason is that $pq$ is a cut edge between $T_1$ and $T_2$, and going from $T_1$ through $pq$ will lead one to $T_2$ or vice versa.)

We refer to Figure~\ref{fig:fig2}, and, assuming a proper
partitioning for $T_1$ and $T_2$ is known as above and the pairing of $(T_{1,k},T_{2,l})$ and $(T_{1,l},T_{2,k})$ is fixed. Then for $x,p_1\in T_{1,k}$, $p_2\in T_{2,l}$, and $y,q_1\in T_{2,l}$ and $q_2\in T_{2,l}$ (or vice versa),
with $\{k,l\}=\{1,2\}$, we define
\begin{eqnarray*}
f_1(x,y,p_1,q_1,p_2,q_2)&=&\delta_{T_{1,k}}(x,p_1)+|p_1q_1|+\delta_{T_{2,l}}(q_1,y)\,,\\
f_2(x,y,p_1,q_1,p_2,q_2)&=&\delta_{T_{1,l}}(x,p_2)+|p_2q_2|+\delta_{T_{2,k}}(q_2,y)\,,\text{ and}\\
f(x,y,p_1,q_1,p_2,q_2)&=&\min\{f_1(x,y,p_1,q_1,p_2,q_2), f_2(x,y,p_1,q_1,p_2,q_2)\}\,.
\end{eqnarray*}

Clearly, to compute $f(-)$, the crucial part is to find
the partition of $T_1$ and $T_2$. We show next that Case 1 and Case 2 can be solved in $O(n^4)$ time, and the following lemma is proved first.

\begin{lemma}
Suppose that the optimal solution for the twin bridges problem occur in Case 1 or Case 2, there must be two edges $e_i\in E(T_i), i\in \{1,2\}$, such that $T_i-\{e_i\}, i\in\{1,2\}$, each results in a pair of trees $(T_{i,1},T_{i,2})$ and the problem is reduced to solving
two optimal bridge problems with input instances $(T_{1,1},T_{2,1})$,
and $(T_{2,1},T_{2,2})$, or vice versa.
\label{lem5}
\end{lemma}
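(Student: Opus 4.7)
The plan is to start from an optimal twin bridges configuration $(p_1^*, q_1^*), (p_2^*, q_2^*)$ with constrained diameter $D^*$ realized in Case~1 (the Case~2 argument is symmetric) and construct $e_1, e_2$ from a natural \emph{preferred-bridge} partition induced by the optimum itself. For $v \in V(T_1)$, define $g_i(v) = \delta_{T_1}(v, p_i^*) + |p_i^* q_i^*|$ for $i \in \{1,2\}$, the cost of reaching $T_2$ through bridge~$i$; define $h_i$ analogously on $T_2$. I would take $e_1$ to be the unique edge on the $p_1^*$-to-$p_2^*$ tree-path in $T_1$ at which $g_1 - g_2$ changes sign (breaking ties arbitrarily), and $e_2$ similarly in $T_2$. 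Write $T_{1,1} \ni p_1^*$, $T_{1,2} \ni p_2^*$ for the two components of $T_1 - \{e_1\}$, and analogously $T_{2,1}, T_{2,2}$ for $T_2 - \{e_2\}$.

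The first substantive step is to verify that the two components $T_{1,k}$ coincide with the preferred-bridge partition $\{v : g_1(v) \leq g_2(v)\}$ and its complement. This rests on two observations: (i) along the $p_1^*$-$p_2^*$ tree-path, $g_1$ strictly increases while $g_2$ strictly decreases by the same edge-weight per step, so $g_1 - g_2$ flips sign at exactly one edge, namely $e_1$; and (ii) any vertex $v$ off this path attaches to a unique ``anchor'' $v'$ on the path, and $g_i(v) = g_i(v') + \delta_{T_1}(v, v')$ for each $i$, so $v$ inherits its preferred bridge from $v'$. An identical argument works in $T_2$ for $e_2$.

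Next, since Case~1 realizes $D^*$ at some pair $(x^*, y^*) \in V(T_1) \times V(T_2)$ whose shortest $T''$-path uses bridge~1, we have $g_1(x^*) \leq g_2(x^*)$ and $h_1(y^*) \leq h_2(y^*)$, placing the realizing pair in $T_{1,1} \times T_{2,1}$. Using $(p_1^*, q_1^*)$ as a candidate bridge for the optimal-bridge instance $(T_{1,1}, T_{2,1})$ yields value $\max_{x \in T_{1,1}} \delta_{T_1}(x, p_1^*) + |p_1^* q_1^*| + \max_{y \in T_{2,1}} \delta_{T_2}(q_1^*, y)$, which is bounded above by the full Case~1 expression and hence by $D^*$; so the OB optimum on $(T_{1,1}, T_{2,1})$ is at most $D^*$, and a symmetric argument bounds OB$(T_{1,2}, T_{2,2}) \leq D^*$ via Case~2. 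Conversely, the two OB bridges together form a feasible twin bridges solution, so its constrained diameter is at least $D^*$ by optimality, and the two bounds meet.

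The hard part will be the connectivity/monotonicity check that makes the preferred-bridge partition a single-edge cut of each tree, including the tie-breaking when $g_1(v) = g_2(v)$ on some anchor vertex, and the bookkeeping between the two cuts $e_1, e_2$ and the two admissible pairings $(T_{1,1} \leftrightarrow T_{2,1})$ versus $(T_{1,1} \leftrightarrow T_{2,2})$; the ``or vice versa'' clause of the statement is precisely to accommodate the latter pairing, which arises when bridge~1 happens to connect $T_{1,1}$ to $T_{2,2}$ rather than to $T_{2,1}$.
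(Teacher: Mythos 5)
Your overall strategy---cutting each tree at a single ``balance'' edge on the $p_1^*$-to-$p_2^*$ (resp.\ $q_1^*$-to-$q_2^*$) tree path so that each side of the cut prefers one bridge---is essentially the paper's: the paper picks an edge $e_1=(p',p'')$ on the cycle $C^*$ satisfying a distance-balance condition and argues every vertex of the component $T_{1,1}$ is at least as close to $p_1$ as to $p_2$. Your version folds the bridge lengths into the comparison via $g_i$, and your monotonicity/anchor argument for why the preferred-bridge sets are connected is a cleaner rendering of what the paper leaves implicit. So this is not a different route; the question is whether the details close, and two of your steps do not go through as written.

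First, from ``the shortest constrained $x^*$--$y^*$ path uses bridge~$1$'' you conclude $g_1(x^*)\le g_2(x^*)$ \emph{and} $h_1(y^*)\le h_2(y^*)$; only the sum inequality $\delta_{T_1}(x^*,p_1^*)+|p_1^*q_1^*|+\delta_{T_2}(q_1^*,y^*)\le\delta_{T_1}(x^*,p_2^*)+|p_2^*q_2^*|+\delta_{T_2}(q_2^*,y^*)$ follows, and a pair can route through bridge~$1$ while $x^*$ individually prefers bridge~$2$ in the sense of $g$. Second, and more seriously, you bound the value of candidate $(p_1^*,q_1^*)$ on $(T_{1,1},T_{2,1})$ by ``the full Case~1 expression and hence by $D^*$.'' But $D^*$ bounds, for each cross pair, the \emph{minimum} of the two bridge paths, not the bridge-$1$ path itself, so the Case~1 maximum need not be at most $D^*$. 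Your partition does not repair this: writing $P_i(x,y)$ for the path through bridge~$i$, adding $g_1(x)\le g_2(x)$ and $h_1(y)\le h_2(y)$ double-counts the bridge lengths and yields only $P_1(x,y)+|p_1^*q_1^*|\le P_2(x,y)+|p_2^*q_2^*|$. When $|p_2^*q_2^*|>|p_1^*q_1^*|$ a pair in $T_{1,1}\times T_{2,1}$ can have $P_2<P_1$ with $P_1>D^*$ (e.g.\ $|p_1^*q_1^*|=0$, $|p_2^*q_2^*|=10$, $x$ and $y$ essentially at $p_2^*$ and $q_2^*$ with $\delta_{T_1}(p_1^*,p_2^*)=\delta_{T_2}(q_1^*,q_2^*)=10$: both prefer bridge~$1$ under $g,h$, yet $P_1=20$ while $P_2=10$). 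So the key claim---that for every $x\in T_{1,1}$, $y\in T_{2,1}$ the bridge-$1$ path itself is at most $D^*$---is exactly what still needs proof, and the $g/h$ cut does not deliver it. Two smaller gaps: the sign-change edge need not exist ($g_1-g_2$ can have constant sign along the whole path), and in the converse direction ``the two bounds meet'' silently ignores the mixed pairs in $T_{1,1}\times T_{2,2}$ and $T_{1,2}\times T_{2,1}$, which are covered by neither optimal-bridge instance.
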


\begin{proof}
We refer to Figure~\ref{fig:fig2}. Suppose that $p_1q_1$ and $p_2q_2$, $p_i\in V(T_1), q_i\in V(T_2)$ for $i\in\{1,2\}$, form the optimal solution for
the twin bridges problem under Case 1 or Case 2. Also suppose that the optimal solution is formed by $x\in V(T_1), y\in V(T_2)$, through the bridge $p_1q_1$; and $z\in V(T_1),w\in V(T_2)$, through the bridge $p_2q_2$. Then in $T_1\cup T_2\cup\{p_1q_1\}\cup\{p_2q_2\}$ there is a cycle $C^*=(p_1,\cdots,p_2,q_2,\cdots,q_1)$, which is 2-connected. Let $e_1=(p',p'')$ be an edge on $C^*$ but not on either of the constrained diameters between $x,y$ and $z,w$, and also on $C^*$, $\delta_{C^*}(p',p_2)\geq\delta_{C^*}(p'',p_1)$ (or $\delta_{T_1}(p'',p_2)\geq\delta_{T_1}(p',p_1)$). Then $(p',p'')$ can be deleted to decompose $T_1$ into $T_{1,1}$ (containing $p_1$) and $T_{1,2}$ (containing $p_2$). The reason we could delete $e_1$ is that for a point $x'$ in $T_{1,1}$ which is closer to $p'$ than $p''$, we have 
\begin{eqnarray*}
\delta_{T_1}(x',p_2)&=&\delta_{T_{1,1}}(x',p')+\delta_{C^*}(p',p_2)\\
&\geq&\delta_{T_{1,1}}(x',p')+\delta_{C^*}(p'',p_1)\\
&=&\delta_{T_{1,1}}(x',p')+|p'p''|+\delta_{T_{1,1}}(p',p_1)\\
&\geq&\delta_{T_{1,1}}(x',p')+\delta_{T_{1,1}}(p',p_1)\\
&=&\delta_{T_{1,1}}(x',p_1).
\end{eqnarray*}
In other words, deleting $e_1$ would not affect the (optimal) diameter from a point in $T_1$ through the bridge $p_1q_1$. Similarly, if $(p',p'')$ is on $C^*$ but not on either of the (optimal) constrained diameters, and we have $\delta_{C^*}(p',p_2)\leq\delta_{C^*}(p'',p_1)$ (or $\delta_{T_1}(p'',p_2)\leq\delta_{T_1}(p',p_1)$, then $(p',p'')$ can be deleted such that for a point $z'$ in
$T_{1,2}$ which is closer to $p''$ than $p'$, we have 
$\delta_{T_1}(z',p_1)\geq\delta_{T_{1,2}}(z',p_2)$.
Symmetrically, $T_2$ can also be decomposed into $T_{2,1},T_{2,2}$ by deleting some edge $e_2\in E(T_2)$.
\end{proof}

With this lemma, it is easy to solve Case 1 and Case 2 in $O(n^4)$ time. We enumerate the $O(n^2)$ pairs of edges
$(e_1,e_2)$ with $e_i\in E(T_i), i\in\{1,2\}$. For each pair,
we delete them from $T_1$ and $T_2$. Then for the two resulting pairs of trees, one from $T_1$ and the other from $T_2$, we solve the optimal bridge problem in $O(n^2)$ time. 

It is noted, surprisingly, that the two optimal bridges $p_1q_1$ and $p_2q_2$ could intersect in some cases, contrary to many geometric problems. In Figure~\ref{fig:fig3}, we show such an example, $\triangle{bcd}$ is initially an equilateral triangle with $da$ being the height of the edge $bc$, we then move $a$ slightly out of the triangle and moving $c$ toward $b$ to have $|bd|>|bc|=|cd|=|ad|$. We also set $|ax|=|by|=|cz|=|dw|=\varepsilon$. If the two intersecting bridges $bc$ and $ad$ are chosen, the optimal solution value is
$|bc|+2\varepsilon$; while if we choose the non-intersecting bridges $ac$ and and $bd$, the solution value is
$|bd|+2\varepsilon$, which is larger. Therefore, when in practical applications the two bridges must not intersect, we might need to seek sub-optimal solutions. 

\begin{figure}
\centering
\includegraphics[width=0.30\textwidth]{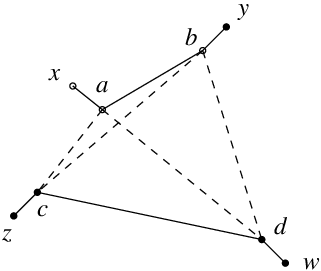}
\caption{\label{fig:fig3}An example for the twin bridges problem in which the two optimal bridges must intersect. The
two input trees are paths: $T_1=(x,a,b,y)$ and $T_2=(z,c,d,w)$.}
\end{figure}

It remains to handle Case 3 and 4 now. We focus on Case 3 here as Case 4 can be similarly handled.
By definition, in Case 3 the optimal solution is the maximum of $\delta_{T_1}(x,p_1)+|p_1q_1|+\delta_{T_2}(q_1,q_2)+|q_2p_2|+\delta_{T_1}(p_2,z)$, with $x,z\in V(T_1)$.
This can be solved as follows. First, let $\delta_{T_1}(x,z)$
be the diameter of $T_1$; naturally, that implies $x,z$ are leaves of $T_1$. Then the problem is to identify two vertices
on $\delta_{T_1}(x,z)$, say $p_1$ and $p_2$, and also two vertices on $T_2$, say $q_1$ and $q_2$, such that  
$$g(p_1,q_1,p_2,q_2)\coloneqq\delta_{T_1}(p_1,p_2)-\{|p_1q_1|+\delta_{T_2}(q_1,q_2)+|q_2p_2|\}$$
is maximized. Obviously, this can be solved by enumerating all
4-tuples $(p_1,q_1,p_2,q_2)$, which incur a cycle
$\langle p_1,q_1,\delta_{T_2}(q_1,q_2),q_2,p_2,\delta_{T_1}(p_2,p_1)\rangle$, where $\delta_{T_1}(p_1,p_2)$ is a subpath on the diameter $\delta_{T_1}(x,z)$. 
Once $(p_1,q_1,p_2,q_2)$ is given, $g(p_1,q_1,p_2,q_2)$ can be computed in $O(1)$ time. Hence all we need to do is to select the maximum of these $O(n^4)$ values
for Case 3 (and symmetrically for Case 4).
By evaluating the recorded values for all recorded solutions in Case 1 and 2 as well, we therefore have the following theorem.

\begin{theorem}
The twin bridges problem can be solved in $O(n^4)$ time.    
\end{theorem}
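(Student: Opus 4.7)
The plan is to take the maximum over the four cases listed immediately before the theorem and show that each can be handled in $O(n^4)$ time, after which the overall algorithm simply returns the minimum (over all choices of two bridges) of the case-maximum. Since the constrained diameter of $T''$ is by definition the maximum distance realized across Cases 1--4, it suffices to (i) enumerate candidate bridge pairs in a way adapted to each case and (ii) evaluate the case-specific objective in $O(1)$ amortized time per candidate, using the preprocessed all-pair shortest-path tables on $T_1$ and $T_2$ that the paper assumes throughout.

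For Cases 1 and 2, Lemma~\ref{lem5} is the workhorse: it guarantees that the optimum corresponds to a pair of edges $e_1 \in E(T_1)$, $e_2 \in E(T_2)$ whose removal decomposes $T_i$ into $(T_{i,1}, T_{i,2})$ so that the twin bridges problem reduces to two independent optimal bridge problems on a matched pairing of the four subtrees. Algorithmically, I would enumerate all $O(n^2)$ pairs $(e_1, e_2)$; for each pair, remove them, try both pairings of the four resulting subtrees, and invoke the $O(n^2)$ optimal bridge algorithm from Section~3 on each of the two subinstances. This yields $O(n^2) \cdot O(n^2) = O(n^4)$ time overall for these two cases, and the per-pair cost is unaffected by the splitting because each subtree remains a simple tree.

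Cases 3 and 4 are handled by brute enumeration combined with the endpoint-factoring trick already sketched in the text. Focusing on Case~3, the contribution of $x, z \in V(T_1)$ factors out: for any fixed 4-tuple $(p_1, q_1, p_2, q_2)$, the worst endpoints are $x \in \arg\max_{x} \delta_{T_1}(x, p_1)$ and $z \in \arg\max_{z} \delta_{T_1}(p_2, z)$, both retrievable in $O(1)$ via the precomputed tables. Enumerating the $O(n^4)$ choices of $(p_1, q_1, p_2, q_2)$ on the diameter path and in $T_2$, and evaluating the closed-form $g(p_1,q_1,p_2,q_2)$ in $O(1)$ per tuple, gives $O(n^4)$ time; Case~4 is symmetric with the roles of $T_1$ and $T_2$ swapped.

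The final algorithm returns the smallest of the four case-optimal values, for a total running time of $O(n^4)$. The main conceptual obstacle has already been dispatched by Lemma~\ref{lem5}; without that lemma it would be unclear that Cases 1 and 2 admit any decomposition into independent optimal bridge subinstances. One small subtlety worth flagging in the final write-up is that during the enumeration in Cases 3 and 4 we must enforce $p_1 \neq p_2$ and $q_1 \neq q_2$ as required by the problem statement, but this is an $O(1)$ check per candidate and does not affect the complexity.
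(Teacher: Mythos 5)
Your proof follows the paper's argument essentially verbatim: the same four-case decomposition, the same use of Lemma~\ref{lem5} to reduce Cases 1 and 2 to $O(n^2)$ optimal-bridge subinstances over the $O(n^2)$ choices of deleted edge pairs, and the same $O(n^4)$ brute-force enumeration of $(p_1,q_1,p_2,q_2)$ with $O(1)$ evaluation for Cases 3 and 4. The only (harmless) deviation is that you recover the worst endpoints $x,z$ by per-tuple farthest-vertex table lookups, whereas the paper first fixes $x,z$ as the diameter endpoints of $T_1$ and restricts $p_1,p_2$ to the diameter path; both variants stay within $O(n^4)$.
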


In the next section, we consider the more general problem
of inserting $k$ edges in a planar road network such that for a set of pairs of vertices we would like to reduce the shortest path distance between each pair. We call this problem Reducing Distances Between Pairs (RDBP), and we show that RDBP is NP-complete. 

\section{RDBP is NP-complete}

We formally define {\em Reducing Distances Between Pairs (RDBP)} as follows: Given a planar straight-line graph $G=(V,E)$ with $n$ vertices and a set of $m$ pairs 
$P=\{(u_i,v_i)\mid u_i,v_i\in V, i\in [m]\}$, insert $k$
edges into $G$ to obtain $G'$ such that the shortest distance between $u_i$ and $v_i$, $i\in [m]$, are all decreased (i.e.,
$\delta_{G'}(u_i,v_i)<\delta_G(u_i,v_i)$ for $i\in [m]$).

Our reduction is from Vertex Cover for Planar 2-Connected Cubic Graphs which is known to be NP-complete \cite{BM01}. If the graph $G$ is not given with a straight-line embedding, then following Fary's theorem \cite{Fary48} it is possible to do that (in fact,
even over a grid of size $\Theta(n^2)$ \cite{DBLP:journals/combinatorica/FraysseixPP90}). Consequently, we can assume that $G$ is given with a straight-line embedding.

The idea of the reduction is as follows. At each vertex $u$
of $G$, for its three incident edges $(u,x)$, $(u,y)$ and $(u,z)$ we cut out a small triangle $\triangle(u_1u_2u_3)$.
Then $u$ is replaced by three paths $\langle u_1,u,u_2,\cdots,x_u\rangle$, $\langle u_2,u,u_1,\cdots,y_u\rangle$ and $\langle u_3,u_1,u,u_2,\cdots,z_u\rangle$.
(All the $\cdots$ represents a subpath of constant size.) Moreover, as all the three paths can be shortened by taking the shortcut $u_1u_2$, if $u$ is selected for a node in a Vertex Cover solution to cover
three edges $(u,x),(u,y)$ and $(u,z)$, then the shortcut $u_1u_2$ must be taken. If we perform this transformation similarly at nodes like $x,y$ and $z$ on $G$ to
obtain $G'$, then the three pairs in $G'$ corresponding to edges $(u,x),(u,y)$ and $(u,z)$ in $G$ would be
$(x_u,u_x)$, $(y_u,u_y)$ and $(z_u,u_z)$ --- they will be stored in the set of pairs $P$, though $u_x,u_y$ and $u_z$ are not drawn in the figure. Note that, by construction, each of these pairs is corresponding to an edge in $G$, which also corresponds to a unique simple (geometric) path, before taking the shortcuts. Moreover, the maximum degree of $G'$ is four.
The construction of $G'$ from $G$ takes $O(n+m)$ time, where $m=O(n)$ and in addition a set $P$ of $m$ pairs are constructed. 
At this point, it is clear that: $G$ has a vertex cover of size $K$ if and only $K$ shortcuts can be taken in $G'$ such that all the shortest distances for pairs $(u,v)$ in $P$ can be reduced.
Note that after the shortcuts are added the resulting graph has a maximum degree of five.
As RDBP is obviously in NP, we have the following theorem.

\begin{figure}
\psfrag{x}{$x$}
\psfrag{y}{$y$}
\psfrag{z}{$z$}
\psfrag{u}{$u$}
\psfrag{xu}{$x_u$}
\psfrag{yu}{$y_u$}
\psfrag{zu}{$z_u$}
\psfrag{u1}{$u_1$}
\psfrag{u2}{$u_2$}
\psfrag{u3}{$u_3$}
\psfrag{G}{$G$}
\centering
\includegraphics[width=0.90\textwidth]{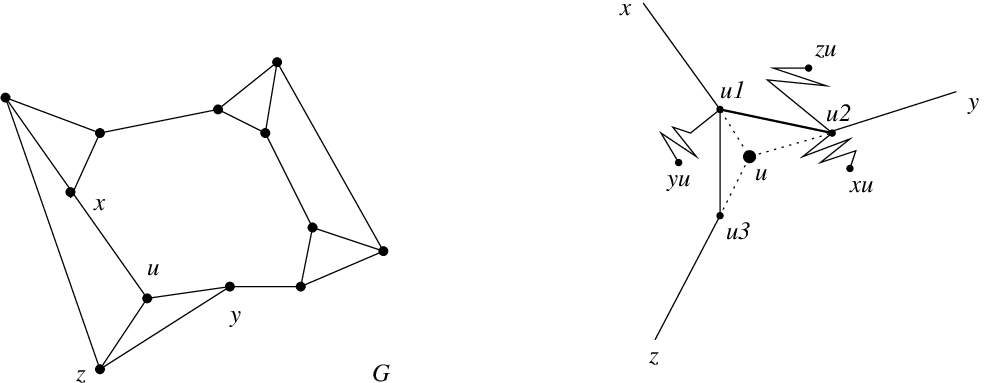}
\caption{\label{fig:fig4}An illustration for the reduction from Vertex Cover on Planar 2-Connected Cubic Graphs to RDBP.}
\end{figure}

\begin{theorem}
The Reducing Distances Between Pairs problem is NP-complete on a given planar graph with degree at most four and the inserted edges have bounded lengths.    
\end{theorem}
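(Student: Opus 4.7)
The plan is to establish membership in NP trivially (given $k$ edges to insert, one verifies in polynomial time via BFS/Dijkstra on $G'$ that every pair's distance strictly decreases) and then reduce from Vertex Cover on Planar 2-Connected Cubic Graphs, which is NP-complete \cite{BM01}. Given such a cubic graph $G$, I would first invoke Fáry's theorem \cite{Fary48} (with a grid embedding via \cite{DBLP:journals/combinatorica/FraysseixPP90}) to fix a planar straight-line drawing, so the resulting instance of RDBP is indeed planar with the required embedding.

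Next I would design a local vertex gadget. At each vertex $u$ with incident edges $(u,x), (u,y), (u,z)$, I cut off a tiny triangle near $u$ with corners $u_1, u_2, u_3$ placed along the three incident edges, and replace the vertex by a small cycle/paths routed around the triangle so that the only ``cheap'' alternative between pairs of sides is the chord $u_1u_2$ (and analogous chords). Each original edge $(u,x)$ of $G$ then becomes a designated pair $(x_u, u_x)\in P$ whose unique short simple path in the constructed graph $G'$ passes through the triangles at $u$ and at $x$. The shortcut edges available for insertion are precisely the small chords of these triangles (having bounded, in fact constant, length); inserting the chord at $u$ shortens the paths corresponding to all three edges incident to $u$ in $G$, mirroring the act of placing $u$ into a vertex cover. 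Setting $k=K$ (the target vertex cover size) completes the construction, which clearly runs in $O(n)$ time; the final graph is planar and has maximum degree four (each $u_i$ has degree at most $4$ by construction even after inserting all three chords).

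For correctness, the forward direction is direct: from a vertex cover $C$ of size $K$, insert the chord of the triangle at each $u\in C$; every edge $(u,v)\in E(G)$ is covered, so the pair $(v_u, u_v)$ gets its corresponding path strictly shortened. For the reverse direction, I would argue that, up to replacement, any useful insertion must be one of these chords (since other short-distance shortcuts would violate planarity of $G'$, exceed the length bound, or not actually reduce any pair's distance); hence a solution of size $K$ to RDBP yields a set of $K$ triangle-gadget chords whose corresponding vertices cover all edges of $G$.

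The main obstacle is the last point: certifying that the only distance-reducing edges of bounded length are the intended chords, so that the set of ``useful'' insertions is in bijection with vertex choices of $G$. I would address this by choosing the gadget's geometry so that the triangle chords are shorter than any other pair of non-adjacent vertices within the length budget, and by padding each replacement path between consecutive gadgets with enough vertices so that any alternative shortcut either fails to shorten the unique short path realizing some pair $(x_u,u_x)$ or violates the imposed upper bound on inserted edge length. A careful choice of the triangle size relative to the inter-gadget path lengths, together with the planarity-preserving straight-line embedding, makes these geometric conditions simultaneously satisfiable, completing the proof.
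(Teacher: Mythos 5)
Your proposal follows essentially the same route as the paper: a reduction from Vertex Cover on Planar 2-Connected Cubic Graphs via F\'ary/grid straight-line embedding, a triangle gadget at each vertex whose single chord $u_1u_2$ shortens the three paths representing the incident edges, pairs in $P$ corresponding to edges of $G$, and $k=K$. Your treatment is sound and, if anything, more explicit than the paper about the one delicate point---arguing that no inserted edge other than a designated gadget chord can be ``useful'' under the length bound---which the paper asserts without elaboration; just make sure all three edge-paths at $u$ are routed through the segment $u_1$--$u$--$u_2$ so that a single chord per vertex suffices.
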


\section{Concluding Remarks}

There are several open problems from this paper. For example, for the optimal bridge problem itself, can a near quadratic lower bound be proved?
Also, for RDBP, can we show the NP-completeness of its natural variation, i.e., $P$ is not given and the goal is to minimize the diameter of $G'$?

\section*{Acknowledgments}

This research was carried out when the third author visited University of Trier in September, 2023. We also thank anonymous reviewers for several useful comments.


\end{document}